\title{\LARGE \bf
Observability of Nonlinear Dynamical Systems over Finite Fields
}
\author{Ramachandran Anantharaman \\ Department of Mathematics \\ Univesity of Namur, Belgium \\ ramachandran.chittur@unamur.be \and Virendra Sule \\ Department of Electrical Engineering \\ Indian Institute of Technology - Bombay, India \\vrs@ee.iitb.ac.in} 
\newtheorem{theorem}{Theorem}
\newtheorem{corollary}{Corollary}
\newtheorem{lemma}{Lemma}
\newtheorem{definition}{Definition}
\newtheorem{remark}{Remark}
\newtheorem{proposition}{Proposition}
\let\OldStatex\Statex
\renewcommand{\Statex}[1][3]{%
\setlength\@tempdima{\algorithmicindent}%
\OldStatex\hskip\dimexpr#1\@tempdima\relax}
\def\fn{\mathbb{F}^n}
\def\fm{\mathbb{F}^m}
\def\ff{\mathbb{F}}
\def\fn{\mathbb{F}^n}
\def\Vo{\mathcal{F}}
\def\tz{\tilde{z}}
\def\tz{\tilde{z}}
\def\mz{\mathbf{Z}}
\def\matK{\mathbf{K}}
\def\mcb{\mathcal{B}}
\def\beq{\begin{equation}}
\def\eeq{\end{equation}}
\def\phat{\hat{\psi}}
\def\hchi{\hat{\chi}}
\def\hy{\tilde{y}}
\def\mco{\mathcal{O}}
\def\Ffn{\mathcal{F}({\mathbb{F}^n})}
\def\funf{\mathcal{F}(\mathbb{F})}
\def\hx{\tilde{x}}
\def\hF{\tilde{F}}
\def\hg{\tilde{g}}
\def\ST{h}
\def\Koop{\mathcal{K}}
\begin{document}
\title{Observability of Nonlinear Dynamical Systems over Finite Fields}
\author{Ramachandran Ananthraman and Virendra Sule 
\thanks{R Ananthraman is currently with the Department of Mathematics and Namur Institute for Complex Systems (naXys), University of Namur, Belgium. {\tt ramachandran.chittur@unamur.be}}
\thanks{V Sule is a retired Professor at Department of Electrical Engineering, Indian Institute of Technology-Bombay, India. {\tt vrs@ee.iitb.ac.in}}
}
\maketitle
\begin{abstract}
This paper discusses the observability of nonlinear Dynamical Systems over Finite Fields (DSFF) through the Koopman operator framework. In this work, given a nonlinear DSFF, we construct a linear system of the smallest dimension, called the Linear Output Realization (LOR), which can generate all the output sequences of the original nonlinear system through proper choices of initial conditions (of the associated LOR). We provide necessary and sufficient conditions for the observability of a nonlinear system and establish that the maximum number of outputs sufficient for computing the initial condition is precisely equal to the dimension of the LOR. Further, when the system is not known to be observable, we provide necessary and sufficient conditions for the unique reconstruction of initial conditions for specific output sequences.  
\end{abstract}
\section*{Notations}
$\ff$ denotes a finite field and $\ff_q$ denotes the finite field of $q$ elements, where $q = p^m$, for some prime $p$. $\funf$ denotes the vector space of $\ff$-valued functions over $\ff$. $\ff^n$ denotes the Cartesian space of $n$-tuples over $\ff$ and $\Ffn$ denotes the vector space of $\ff$-valued functions over $\ff^n$. The coordinate functions $\chi_i \in \Ffn$ are defined by their values $\chi_i(x) = x_i$ for $x \in \ff^n$ and $i$-th co-ordinate $x_i \in \ff^n$. 

\section{Introduction}
Dynamical Systems over Finite Fields (DSFF) are a special class of dynamics where the state space of the system is considered to be a vector space over a finite field $\ff$. Such systems arise in models of Boolean Networks, Communication and cryptographic protocols, Genetic and Biological Networks, and Finite State Automata, to name a few. This class of systems has been studied in detail from a dynamical systems point of view \cite{Gill1, Gill2}. When the dynamics is nonlinear, the problems of computing the solutions, deciding controllability, observability, and optimal control are inherently hard as they require solving nonlinear algebraic equations over finite fields \cite{RamSule}. In this work, we address the problem of observability of nonlinear DSFF. Observability for dynamics over Boolean Dynamics has been previously addressed in the works of \cite{DCheng_2009Obs, ValcherObs2013,Laschov2013, Weiss2018} and mainly depends on the semi-tensor product based approaches towards Boolean Dynamics developed in \cite{DCheng_STP2010}. These works, though solve for the observability of nonlinear systems through a linear algebraic framework, require computations over exponential size matrices, leading to poor scalability with respect to the dimension of the system. The work \cite{RamSule} provides an alternative linear theory for nonlinear DSFF through the Koopman operator framework without necessarily requiring exponential sized linear systems. In this paper, we focus on problem of observability in such nonlinear systems by defining a minimal linear realization (called as the Linear Output Realization) of a nonlinear autonomous system using the Koopman operator. With this LOR, we aim to give necessary and sufficient conditions for the observability of the dynamical system and guarantee upper bounds on the number of outputs required for the unique reconstruction of the initial condition from the output. 

The paper is organized as follows. The following section establishes the preliminaries and discusses the problem formulation. Section \ref{s3} discusses the construction of the LOR and establishes invariance of the dimension of the LOR with respect to a nonsingular (and possibly) nonlinear state transformation of the system. Section \ref{s4} gives conditions for observability of the nonlinear DSFF through the LOR and an upper bound on the number of outputs sufficient for retrieving the initial conditions. 
\section{Preliminaries}
\label{s1}
Consider a nonlinear DSFF represented as
\begin{align}
    \label{eq:DSFF}
    \begin{aligned}
    x(k+1) = F(x(k)) \\
    z(k) = g(x(k)),
    \end{aligned}
\end{align}
where $x(k) \in \fn$, $z(k) \in \fm$, $F(x): \fn \to \fn$ and $g(x): \fn \to \fm$ are the state, output, state transition map and output map respectively. 

\begin{definition}[Observable states]
A state $x_0 \in \ff^n$ is said to be observable if there exists a positive integer $K$ such that $x_0$ is the unique initial condition of the output sequence $z(k)$ it generates, for $k = 0,\dots,K-1$. The smallest such $K$, denoted as $K_{x_0}$ is defined as the observability index of the initial condition $x_0$.  
\end{definition}
\begin{definition}[Observability, Observability index and Observability window of a system]
A system is said to be observable if every initial condition $x_0 \in \ff^n$ is observable. According to Definition 1, an observable system has a unique $K_{x_0}$ for every initial condition $x_0$. The maximum $K = max\ K_{x_0}$ over all $x_0 \in \ff^{n}$, is called the observability index of the system, and the interval $[0,K-1]$ is defined as the observability window of the system (\ref{eq:DSFF}).
\end{definition}
\begin{proposition}
For an observable time-invariant DSFF (\ref{eq:DSFF}), $K$ is always finite.
\end{proposition}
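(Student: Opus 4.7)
The plan is to exploit the finiteness of the state space $\mathbb{F}^n$, which contains exactly $|\mathbb{F}|^n$ elements. This is the only ingredient that really matters beyond reading off the definitions.

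First, I would unpack Definition 1 carefully. Saying that a state $x_0$ is observable already asserts the existence of a positive integer $K_{x_0}$ such that the output sequence $z(0),\dots,z(K_{x_0}-1)$ uniquely determines $x_0$ among all initial conditions. Hence finiteness of each individual $K_{x_0}$ is not something to be proved but something to be read off from the definition, and $K_{x_0}$ is moreover the smallest such integer, hence well-defined.

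Next, observability of the system (Definition 2) says that every $x_0 \in \mathbb{F}^n$ is observable, so to every state there is associated a finite positive integer $K_{x_0}$. The observability index of the system is $K = \max_{x_0 \in \mathbb{F}^n} K_{x_0}$, which is a maximum over the finite indexing set $\mathbb{F}^n$ of finitely many finite positive integers. A maximum of a finite collection of positive integers is attained and finite, which yields the claim.

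There is no real obstacle in this argument; the proof is a direct consequence of the definitions combined with the fact that $|\mathbb{F}^n| = |\mathbb{F}|^n < \infty$. The only conceptual point worth flagging is that if one instead defined observability over an infinite state space, the analogous proposition would fail in general, since the supremum of infinitely many finite integers need not be finite. Thus the essential hypothesis being used is that $\mathbb{F}$ is a finite field; time-invariance is used implicitly, in that $K_{x_0}$ is well-defined without reference to an initial time index.
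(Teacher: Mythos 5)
Your proof is correct, and it is logically tighter than it might appear: given how Definitions 1 and 2 are phrased, each $K_{x_0}$ is finite by definition, and the observability index $K = \max_{x_0 \in \ff^n} K_{x_0}$ is a maximum of finitely many finite positive integers because $|\ff^n| = q^n < \infty$, so it is finite. However, this is a genuinely different route from the paper's argument. The paper does not invoke the definitional finiteness of each $K_{x_0}$ at all; instead it reasons dynamically: over a finite state space every trajectory of the time-invariant map $F$ is ultimately periodic, hence so is every output sequence (a finite pre-periodic part followed by a finite period), and since the outputs beyond the first full period after the pre-period carry no new information about $x_0$, the initial condition must be recoverable from a finite prefix of the output, giving finiteness of $K$. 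The trade-off is this: your argument is more elementary and arguably the cleanest proof of the statement exactly as written, but it yields no information about how large $K$ can be; the paper's argument, while somewhat informal, implicitly provides a quantitative handle on the observability window (bounded by the pre-period plus one period of the output orbit, hence ultimately by the orbit structure in a state space of size $q^n$), which is the substance the authors seem to be after. It would strengthen your write-up to note explicitly that the finite-state-space argument gives finiteness but no effective bound, and that sharper bounds (such as the dimension $N$ of the LOR obtained later in the paper) require the dynamical or Koopman-theoretic structure rather than the definitions alone.
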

\begin{proof}
For a DSFF as in equation (\ref{eq:DSFF}) over $\ff_q^n$, the total number of points in the state space is finite and equal to $q^n$, and the outputs $y \in \ff_q^m$, and the output space contains $q^m$ distinct points. For a shift-invariant DSFF, any initial condition $x_0$ will either be visited periodically with period $\leq q^n$ or will settle down into a periodic orbit (with period $\geq 1$) under the dynamics of (\ref{eq:DSFF}) and hence ultimately periodic \cite{RamSule}. The corresponding output sequence is ultimately periodic, too. Generally, any output sequence contains two parts: the pre-periodic and periodic components. As the output space is finite, both the pre-periodic component and the period of the periodic component are finite. Given that the dynamics is shift-invariant, the initial conditions from the output has to be essentially computed from the outputs $z(k)$, $k = 0,1,\dots,K$, where $K$ is time instant corresponding to the first period of the output orbit after the pre-periodic component. Hence, the number of outputs required to compute the initial condition is finite. 
\end{proof}
 

When the system (\ref{eq:DSFF}) is linear with state transition matrix $A$ and output matrix $C$, a necessary and sufficient condition for the system to be observable is that the following matrix $\mco$ (known as the \emph{Observability matrix}) 
\begin{align}
\label{eq:Obsmat}
    \mco = \begin{bmatrix}C \\ CA \\ \vdots \\ CA^{n-1}\end{bmatrix}
\end{align}
is full rank and the corresponding initial condition $x_0$ which generates the given sequence is the solution of the following system of linear equations
\begin{align}
\label{eq:ObsEqLin}
\mco x_0 =  \begin{bmatrix} z(0) \\ z(1) \\ \vdots \\ z(n-1) \end{bmatrix}.
\end{align}
In linear systems, as a consequence of Cayley Hamilton Theorem, at most $n$ output instances are required to compute the corresponding initial condition $x_0$. In general, for a multi-output linear system, if the system is observable, the observability window can be computed as $[0,n_o - 1]$, where $n_o$ is the observability index\footnote{The observability index of a linear system is the minimal $k$ such that $CA^{k}$ is linearly dependent on the previous iterates $CA^{i}$.} of the system. Also, it can be seen that for linear systems, the observability of one state guarantees the observability of the entire state space (and hence the system itself). 
 \begin{proposition}
 In a linear system, the observability of one particular initial condition is equivalent to the observability of the entire system. 
\end{proposition}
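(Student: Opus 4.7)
The plan is to exploit linearity to reduce the observability of a given state to a kernel condition on the (truncated) observability matrix, and then observe that this condition is intrinsic to the matrix, so it automatically yields observability of every other state.

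First I would unwind the definition. Suppose $x_0$ is observable with index $K = K_{x_0}$, so no $\tilde{x}_0 \neq x_0$ produces the same outputs $z(k) = CA^k x_0$ for $k = 0, \dots, K-1$. Let $\mathcal{O}_K$ denote the truncation of the observability matrix (\ref{eq:Obsmat}) to its first $K$ block rows. Then by linearity the two output sequences agree iff $\mathcal{O}_K(x_0 - \tilde{x}_0) = 0$. Hence observability of $x_0$ is equivalent to $\mathcal{O}_K y \neq 0$ for every nonzero $y \in \ff^n$, i.e., $\ker \mathcal{O}_K = \{0\}$.

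The key observation is that the condition $\ker \mathcal{O}_K = \{0\}$ is a property of the matrix $\mathcal{O}_K$ alone, with no dependence on the distinguished state $x_0$. Consequently, for any other initial condition $x_0' \in \ff^n$ and any candidate $\tilde{x}_0' \neq x_0'$, we again have $\mathcal{O}_K(x_0' - \tilde{x}_0') \neq 0$, so $x_0'$ is also observable with observability index at most $K$. In particular, taking $K$ no larger than the observability index $n_o$ of the system (using Cayley--Hamilton to bound it by $n$), the same bound works uniformly, so the system is observable in the sense of Definition 2. The converse direction is immediate from the definitions.

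I do not expect a real obstacle here; the whole content is the linearity of the map $x_0 \mapsto \mathcal{O}_K x_0$ and the fact that injectivity of a linear map on one fiber forces injectivity everywhere. The only minor care needed is to remark that the $K$ witnessing observability of the single state $x_0$ is enough to witness observability of every other state (no enlargement of $K$ is required), which follows immediately from the kernel formulation.
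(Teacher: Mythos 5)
Your proof is correct and rests on the same idea as the paper's: by linearity, two initial states produce identical outputs over the window iff their difference lies in the kernel of the (truncated) observability matrix, so observability of a single state forces that kernel to be trivial, which is a state-independent condition. The paper phrases this as a contradiction with the full $n$-row matrix $\mco$ while you argue directly with $\mathcal{O}_K$ and note the index carries over uniformly, but it is essentially the same argument.
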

 \begin{proof}
 Given that a specific initial condition $x_0$ is observable from its output sequence $z(0),z(1),z(2),\dots,z(n-1)$. Then $x_0$ is the unique solution to the system of equation (\ref{eq:ObsEqLin}. If the system is unobservable, $\mco$ is singular, and a nontrivial kernel $\mathcal{N}(\mco)$ exists. Let $w \in \mathcal{N}(\mco)$, then we know that 
 \[
 \mco (x_0 + w) = \mco x_0,
 \]
 and $x_0+w$ is also a solution of (\ref{eq:ObsEqLin}) which contradicts the observability of $x_0$. 
\end{proof}

We denote by $\mz$, the output sequence $z(k)$ of arbitrary length, and by $\mz_K$ an output sequence of length $K$ at time instants $0,1,\dots,K-1$.
\begin{definition}[Observability of an output sequence]
An output sequence $\mz_K$ of length $K$ is said to be observable if there exists a unique initial condition $x_0$ which generates the sequence.     
\end{definition}
This notion of observability is defined from the context of a specific output sequence of length $K$ rather than from the context of the initial condition. It can be seen that both these definitions are interchangeable: When an initial condition is observable, its corresponding output sequence $\mz_{K_{x_0}}$ is also observable, where $K_{x_0}$ is the observability index of $x_0$. If the system is observable, it can be seen that every output sequence $\mz_K$ is observable, where $K$ is the observability index of the system. The idea of looking at the observability of a system from the output sequence is important because, in many practical applications, it is necessary to ascertain the uniqueness of the initial condition for a given output sequence $\mz_K$ of length $K$. 


\subsection{Challenges in Observability of a nonlinear DSFF}
For a nonlinear DSFF, determining the observability of an initial condition $x_0$ involves establishing the existence of a positive integer $K$ such that the following system of algebraic equations (\ref{eq:ObsEq}) has $x_0$ as the unique solution. 
\begin{align}
\label{eq:ObsEq}
\begin{aligned}
    z(0) &= g(x_0) \\
    z(1) &= g(F(x_0) \\
    & \ \ \vdots \\
    z(k-1) &= g(F^{(K-1)}(x_0)),
\end{aligned}
\end{align}
where $F^{(j)}$ is the $j^{th}$ composition of $F$ with itself 
\[
F^{(k)} = \underbrace{F \circ F \circ \dots \circ F}_{j\ \text{times}}.
\]
Computing $x_0$ and establishing its uniqueness for a given output sequence as in (\ref{eq:ObsEq}) is called the \emph{observability problem}. Computing the solution of the nonlinear system of equations (\ref{eq:ObsEq}) over a finite field $\ff$ is an inherently hard computational problem. We highlight the challenges in determining the observability of a nonlinear DSFF.
\begin{enumerate}
    \item Solving for multivariate polynomial equations (\ref{eq:ObsEq}) over finite fields is an $NP$-hard problem \cite{LB_Hybrid2009,JDing}. No known efficient algorithm exists for general finite fields. Computer algebraic approaches based on Gr\"{o}bner basis techniques can be used for computing such solutions. However, these techniques are not known to be efficient in scaling up with the dimension of the system.
    \item For a given output sequence $\mz_K$, it is guaranteed that there exists atleast one initial condtion $x_0$ as a solution of (\ref{eq:ObsEq}). It is, however, pertinent to ascertain the uniqueness of the solution $x_0$ to (\ref{eq:ObsEq}) because the computation of a solution $x_0$ does not guarantee the observability of that output sequence.
    \item The definition of observability of a dynamical system implicitly involves the length of the observability window $[0,K-1]$, which guarantees the uniqueness of $x_0$. This window is well-defined for linear systems and equal to $[0,n_o-1]$, where $n_o$ is the observability index. There are no specific results on the upper bound of $K$ for general nonlinear dynamics. A result on the estimate of $K$ is essential for deciding the observability of output and computation of the initial conditions. 
    
    \item Unlike in linear systems, where observability of a single initial condition (or a single output sequence) guarantees the observability of all initial conditions (or of all output sequences), nonlinear systems may have certain initial conditions (or output sequences) to be observable. In contrast, other initial conditions (or output sequences) may not be observable. So, the uniqueness of the solution $x_0$ for a particular output sequence $\mz$ does not guarantee the uniqueness of initial conditions of all output sequences, making the characterization of the observability of the entire system difficult.
    
\end{enumerate}

With the above motivations for the observability of nonlinear DSFF, this paper takes a fresh look at the observability problem through the Koopman Operator framework. 

\subsection{Koopman Operator Framework for DSFF}
Consider the space of functions $\Vo$ of scalar-valued functions $\psi : \fn \to \ff$. This space is of finite dimension (and equal to $q^n$) when $\ff$ is the finite field $\ff_q$. The \emph{Koopman operator} $\Koop$ \cite{Koopman} associated with the dynamics (\ref{eq:DSFF}) is defined as 
\begin{align}
\label{eq:KoopmanOp}
    \Koop \psi = \psi \circ F.
\end{align}
This map $\Koop$ is linear over $\Vo$. Starting from a function $\psi \in \Vo$, one can define the sequence 
\[
\psi, \Koop \psi := \psi(F), (\Koop)^2 \psi := \psi(F^{(2)}),\dots
\]
The iterates of the above sequence can be visualized as the evolution of the following dynamical system 
\begin{align}
\label{eq:KLS}
\psi_{k+1} = \Koop \psi_{k},
\end{align}
starting from $\psi_0 = \psi$. The system (\ref{eq:KLS}) is over $\Vo$ and is $\ff$-linear and defined as the \emph{Koopman linear system} (KLS) for the dynamical system (\ref{eq:DSFF}). The work \cite{RamSule} establishes the relationships between the nonlinear DSFF and the KLS. 

This paper focuses on the solution to the observability problem (\ref{eq:ObsEq}) and develops a necessary and sufficient condition for the observability of the nonlinear DSFF through the Koopman operator framework by restricting it to a specific invariant subspace of $\Vo$.

\section{Linear Output Realization of a DSFF through Koopman Operator}
\label{s3}
In this section, we seek to construct a linear DSFF of smallest dimension through the Koopman operator, which can reproduce each output sequence of the original nonlinear DSFF. Consider the output map of (\ref{eq:DSFF})  to be $g= [g_1,g_2,\dots,g_m]^T$, where each $g_i \in \Vo$ is called the \emph{output function}. Starting from $g_1$, one can construct the sequence 
\[
g_1,\Koop g_1,\Koop^2 g_1 \dots,
\]
Owing to the finite dimension of the space $\Vo$, there exists a smallest $l_1 > 0$ such that $\Koop^{l_1}g_1$ is linearly dependent on its previous iterates. The cyclic subspace of $\Vo$ generated by $g_1$ and having the basis $\{g_1, \Koop g_1,\dots, \Koop^{l_1-1}g_1 \}$ is $\Koop$-invariant and the smallest such invariant subspace containing the first output function $g_1$. 

This space is further expanded by the addition of subsequent output functions $g_i$, and finally, an invariant subspace of $\Vo$ containing all the output functions $g_i$ is computed. The procedure to construct the smallest $\Koop$-invariant subspace $W$ containing all the output functions is given in Algorithm \ref{alg:LORConst}.

\begin{algorithm}[h]
\begin{algorithmic}[1]
\caption{Construction of $W$ - $\Koop$-invariant subspace spanning $g_i$}
\label{alg:LORConst}
\Procedure{$\Koop$-Invariant subspace}{$W$}
\State \textbf{Outputs}: 
\begin{itemize}
    \item[] $W$ - the invariant subspace which span the output functions and $\Koop$-invariant 
    \item[] $\mathcal{B}$ - the basis for the subspace $W$
\end{itemize}
\State Compute the cyclic subspace generated by $g_1$ 
\[ 
Z(g_1, \Koop) = \langle g_1,  \Koop g_1,\dots, (\Koop)^{l_1-1} g_1 \rangle
\]
\label{A1:p1}
\State Set of basis functions $\mathcal{B} = \{g_1,\Koop g_1,\dots,\Koop^{l_1-1}g_1 \}$
\If{$g_2,g_3,\dots,g_m \in \mbox{Span}\{\mathcal{B}\}$} 
\State $W \gets \mbox{Span}\{\mathcal{B}\}$
\State \textbf{halt}
\Else
\State Find the smallest $i$ such that $g_i \notin \mbox{span}\{\mathcal{B}\}$
\State Compute the smallest $l_i$ such that 
\[
\Koop^{l_i} g_i \in \mbox{Span}\{ \mathcal{B} \cup \langle g_i,\Koop g_i,\dots, \Koop^{l_i-1}g_i \rangle\} 
\]
\label{A1:p2}
\State $V_i = \{g_i,\Koop g_i,\dots, \Koop^{l_i-1} g_i \}$
\State Append the set $V_i$ to $\mathcal{B}$
\State \textbf{go to} 5
\EndIf
\EndProcedure
\end{algorithmic}
\end{algorithm}

Choosing $\mcb = \{\psi_1,\dots,\psi_N\}$ to be a basis of $W$, define $\hat{\psi}$ to be the vector of basis functions 
\begin{align}
\label{eq:phat}
\phat = \begin{bmatrix} \psi_1 \\ \vdots \\ \psi_N\end{bmatrix}
\end{align}
and since $W$ is $\Koop$-invariant,
\[
\Koop \phat = \begin{bmatrix} \Koop \psi_1  \\ \vdots \\ \Koop \psi_N\end{bmatrix} =: \matK \phat,
\]
where $\matK \in \ff^{N \times N}$ is the matrix representation restriction of $\Koop|W$ with the basis $\mcb$. Since every $g_i \in W$, there exists an unique matrix $\Gamma \in \ff^{m \times N}$ such that 
\begin{align}
\label{eq:Gamma_map}
g= \Gamma \phat.
\end{align}
The following linear system over $\ff^N$ with output over $\ff^m$ is defined as the \textit{linear output realization} (abbreviated as LOR) of the system (\ref{eq:DSFF}).
\begin{align}
    \label{eq:LOR}
    \begin{aligned}
    y(k+1) &= \matK y(k) \\
    \tz(k) &= \Gamma y(k),
    \end{aligned}
\end{align}
where $y(k) \in \ff^N$ and $\tz(k) \in \ff^m$ are the internal states and the output of the LOR, respectively. 

The following proposition establishes that the LOR reproduces all the output sequences of the original nonlinear DSFF. 
\begin{proposition}
\label{prop:OE}
Given a DSFF as in (\ref{eq:DSFF}) and its corresponding LOR (\ref{eq:LOR}), the outputs $z(k)$ of the DSFF starting from an initial state $x_0$ are identical to the outputs $\tz(k)$ of the LOR when the LOR is initiated as $y_0 = \phat (x_0)$.
\end{proposition}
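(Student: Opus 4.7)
The proof is essentially a bookkeeping exercise showing that the state $y(k)$ of the LOR, when initialized as $y_0 = \hat{\psi}(x_0)$, tracks the evaluation of the basis functions along the nonlinear trajectory, i.e.\ $y(k) = \hat{\psi}(x(k))$ for all $k \geq 0$. The plan is to establish this identity by induction on $k$ and then use it together with the relation $g = \Gamma \hat{\psi}$ to conclude that the outputs agree.

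First I would set up the induction. The base case $k=0$ is immediate from the initialization $y_0 = \hat{\psi}(x_0)$. For the induction step, assume $y(k) = \hat{\psi}(x(k))$. Then by the LOR dynamics, $y(k+1) = \mathbf{K}\, y(k) = \mathbf{K}\, \hat{\psi}(x(k))$. The crucial observation is the componentwise identity that follows directly from the definition of the Koopman operator: for every basis function $\psi_i$, one has $(\mathcal{K}\psi_i)(x) = \psi_i(F(x))$, so evaluating the vector identity $\mathcal{K}\hat{\psi} = \mathbf{K}\hat{\psi}$ at the point $x(k)$ yields
\begin{equation*}
\mathbf{K}\,\hat{\psi}(x(k)) \;=\; (\mathcal{K}\hat{\psi})(x(k)) \;=\; \hat{\psi}(F(x(k))) \;=\; \hat{\psi}(x(k+1)),
\end{equation*}
which closes the induction.

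Next, I would derive the output equality. Applying the defining relation $g = \Gamma\hat{\psi}$ of equation \eqref{eq:Gamma_map} pointwise at $x(k)$ and using the invariant just proved,
\begin{equation*}
z(k) \;=\; g(x(k)) \;=\; \Gamma\,\hat{\psi}(x(k)) \;=\; \Gamma\, y(k) \;=\; \tilde{z}(k),
\end{equation*}
for every $k \geq 0$, which is the desired conclusion.

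There is no real obstacle here: the entire content is that the matrix $\mathbf{K}$ was defined precisely so that $\mathcal{K}\hat{\psi} = \mathbf{K}\hat{\psi}$ as an identity in $\mathcal{F}(\mathbb{F}^n)$, and $\Gamma$ was defined so that $g = \Gamma \hat{\psi}$; the proposition just unrolls these definitions along the trajectory. The only point that is worth flagging explicitly is the distinction between $\mathbf{K}$ acting on the constant vector of scalars $\hat{\psi}(x(k)) \in \mathbb{F}^N$ and $\mathcal{K}$ acting on the vector of functions $\hat{\psi} \in W^N$; the induction step hinges on the fact that these two actions are intertwined by pointwise evaluation, which is exactly how $\mathbf{K}$ was constructed as the matrix representation of $\mathcal{K}|_W$ in the basis $\mathcal{B}$.
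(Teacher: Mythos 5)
Your proof is correct, and it is exactly the standard argument the paper has in mind: the paper omits the proof (referring to Theorem 2 of \cite{RamSule}), and that argument is precisely your induction showing $y(k)=\phat(x(k))$ via the intertwining identity $\matK\,\phat(x)=\phat(F(x))$, followed by applying $g=\Gamma\phat$ pointwise. Nothing is missing; your explicit flag about $\matK$ acting on scalars versus $\Koop$ acting on functions is the right subtlety to note.
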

The proof is similar to Theorem 2 of \cite{RamSule} and omitted for brevity. 

The main difference between this work and \cite{RamSule} is in the construction of the invariant subspace. In this work, we restrict the Koopman operator to the smallest invariant subspace containing the output functions. In contrast, in the previous work, we had restricted the Koopman operator to the smallest invariant subspace containing both the coordinate and output functions, leading to different linear operators and linear systems. The restriction proposed in this work can be viewed as the construction of a (minimal) linear realization of the output of a DSFF, while the restriction proposed in the previous work can be viewed as the construction of a linear representation for a nonlinear system. 

\begin{lemma}
\label{lem:LOR-obs}
The LOR (\ref{eq:LOR}) of the DSFF (\ref{eq:DSFF}) is observable.
\end{lemma}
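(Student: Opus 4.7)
The plan is to verify that the observability matrix of the LOR,
\[
\mco_{\text{LOR}} = \begin{bmatrix} \Gamma \\ \Gamma \matK \\ \vdots \\ \Gamma \matK^{N-1} \end{bmatrix},
\]
has full column rank $N$, by exhibiting $N$ linearly independent rows that are precisely the standard basis vectors of $\ff^N$. I would start by translating between the operator picture and the matrix picture: since $\matK$ is the matrix of $\Koop|_W$ in the basis $\mcb = \{\psi_1,\dots,\psi_N\}$, one has $\Koop^k \phat = \matK^k \phat$, and combining this with $g = \Gamma \phat$ gives the identity $\Koop^k g_i = \sum_{s=1}^{N}\bigl(\Gamma \matK^k\bigr)_{is}\, \psi_s$. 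Hence the $i$-th row of the block $\Gamma \matK^k$ is exactly the coordinate vector, in the basis $\mcb$, of the function $\Koop^k g_i \in W$.

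Next I would invoke the construction of $\mcb$ produced by Algorithm \ref{alg:LORConst}. By that construction, each basis function $\psi_s$ is of the form $\Koop^{j_s} g_{i_s}$ for some $i_s \in \{1,\dots,m\}$ and some integer $0 \le j_s < N$ (the last bound because $W$ has dimension $N$, so no cyclic chain needs to be grown beyond length $N$). Consequently, the coordinate vector of $\psi_s$ with respect to $\mcb$ is the standard unit vector $e_s$, and by the identity from the previous step this vector appears as the $i_s$-th row of the block $\Gamma \matK^{j_s}$ inside $\mco_{\text{LOR}}$.

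Collecting these rows for $s = 1,\dots,N$ yields the $N \times N$ identity matrix as a submatrix of $\mco_{\text{LOR}}$, which proves $\mathrm{rank}(\mco_{\text{LOR}}) = N$ and hence observability of the LOR. An equivalent conceptual phrasing, which I would include as a sanity check, is this: the row span of $\mco_{\text{LOR}}$ corresponds (under the basis isomorphism $W \cong \ff^N$) to the linear span of $\{\Koop^k g_i : k \ge 0,\ i = 1,\dots,m\}$; but by the very definition of $W$ as the smallest $\Koop$-invariant subspace containing $g_1,\dots,g_m$, this span equals $W$, forcing the rank to be $\dim W = N$.

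No real obstacle is expected here; the only thing that requires care is the indexing bookkeeping in Algorithm \ref{alg:LORConst}, namely the fact that the basis is generated by iterating $\Koop$ on the outputs $g_i$ (so that every $\psi_s$ is itself some $\Koop^{j_s} g_{i_s}$). Once that is made explicit, the observability of the LOR is immediate and does not rely on any structural property of $\matK$ beyond the minimality of $W$.
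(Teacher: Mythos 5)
Your proof is correct and follows essentially the same route as the paper's: both arguments observe that every basis function $\psi_s$ of $W$ is of the form $\Koop^{j_s} g_{i_s}$ with $j_s \le N-1$, so that (in coordinates with respect to $\mcb$) each $\psi_s$ appears among the rows of the blocks $\Gamma\matK^k$, $0 \le k \le N-1$, forcing the observability matrix to have rank $N$. Your version is slightly more explicit than the paper's in identifying the rows as coordinate vectors (yielding an identity submatrix) rather than speaking of basis functions being ``rows,'' but the underlying idea is identical.
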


\begin{proof}
The chosen basis of $W$ in Algorithm \ref{alg:LORConst} helps in proving the observability of the LOR. The basis $\mcb$ contains $g_i$ for some $i \subset \{1,2,\dots,m\}$ and their iterates under $\Koop$. Also, since the dimension of the subspace $W$ is $N$, the upper bound $i$ on $(\Koop)^i g_j $ as a basis function of $W$ is $N-1$ (and this can happen only if the iterates of $g_1$ span the entire subspace $W$). The observability matrix of the LOR is given by
\[
\mco = \begin{bmatrix} \Gamma \\ \Gamma \matK \\ \vdots \\ \Gamma \matK^{N-1}\end{bmatrix}.
\]
We prove that the rank of this matrix is $N$, which proves the observability claim. The action of $\mco$ on $\phat$ is given as 
\begin{align*}
    \mco \phat = \begin{bmatrix} \Gamma \phat \\ \Gamma \matK \phat \\ \vdots \\ \Gamma \matK^{N-1} \phat \end{bmatrix}.
\end{align*}
Each of the submatrix $\Gamma \matK^{i} \phat$, $i = 0,1,\dots, N-1 $ can be written as 
\begin{align}
\label{eq:op_rec}
\begin{aligned}
\Gamma \matK^{i}\phat(x) &= \Gamma (\Koop)^{i} \phat(x) \\
 &= \Gamma \phat (F^{(i)}(x)) \\
 &= g(F^{(i)}(x)).
 \end{aligned}
\end{align}
This proves that all the iterates of the $g_i$ under $\Koop$ (or equivalently the functions $g_i(F^{(i)}$), $i = 0,1,\dots,N-1$ are in the rows of the observability matrix $\mco$. Hence, all the basis functions $\psi_i$ of $W$ form a subset of the rows of the observability matrix $\mco$. This means that $\mco$ has $N$ independent rows, which proves the claim. \qed
\end{proof}
The unique $y_0$ for a given output sequence $\mz$ is computed as the solution of the following equation
\begin{align}
\label{eq:phat1}
\begin{bmatrix}
\tz(0) \\ \tz(1) \\ \vdots \\ \tz(N-1) 
\end{bmatrix} = \mco y_0. 
\end{align}

\begin{remark}
The LOR for a DSFF is always observable, irrespective of the observability of the original nonlinear DSFF. The observability of the original DSFF needs a unique computation of $x_0$, while the observability of the LOR requires only the unique computation of $\phat$ at $x_0$, the initial state of the LOR. This result emphasises that the lack of observability of the original nonlinear DSFF does not carry forward as a property of the LOR. 
\end{remark}
\begin{remark}
The LOR for a nonlinear DSFF can be visualized as the smallest dimensional linear system (smallest because it is constructed from the smallest invariant subspace of $\Vo$ containing all the output functions), which can recreate all the output sequences of the original system through proper choices of initial condition $y_0$. For a linear system, this would correspond to the observable part in the Kalman canonical realization. \end{remark}

The following result shows that subspace $W$ is invariant under a nonsingular state transformation of the DSFF. In particular, we see that for a chosen basis of $W$, two DSFFs that are related by a nonsingular state transformation result in the same LOR.   
\begin{theorem}
\label{lem:NS_Transform}
Given two nonlinear DSFFs $S_1$ and $S_2$ related by a nonsingular (possibly nonlinear) state transformation, their corresponding LORs $L_1$ and $L_2$ are of the same dimension and related by a linear state transformation.
\end{theorem}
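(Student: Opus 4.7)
The plan is to lift the nonlinear state transformation between $S_1$ and $S_2$ to a linear intertwining operator between their Koopman operators and then use the minimality in Algorithm \ref{alg:LORConst} to pin down the LORs as being linearly equivalent.

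First I would fix notation. Write $S_1: x(k+1)=F_1(x(k))$, $z(k)=g_1(x(k))$ and $S_2: \tilde{x}(k+1)=F_2(\tilde{x}(k))$, $z(k)=g_2(\tilde{x}(k))$, related by a bijection $T:\ff^n\to\ff^n$ with $\tilde{x}=T(x)$. Then $F_2=T\circ F_1\circ T^{-1}$ and $g_2=g_1\circ T^{-1}$. Define the pullback $\Phi:\Vo\to\Vo$ by $\Phi(\psi)=\psi\circ T^{-1}$; since $T$ is a bijection, $\Phi$ is an $\ff$-linear isomorphism with inverse $\psi\mapsto\psi\circ T$. Let $\Koop_1,\Koop_2$ denote the Koopman operators of $S_1,S_2$ and let $W_1,W_2$ be the minimal invariant subspaces produced by Algorithm \ref{alg:LORConst} for the two systems, with bases $\mcb_1,\mcb_2$ and basis vectors $\phat_1,\phat_2$ of lengths $N_1,N_2$.

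The key step is to show $\Phi$ intertwines the two Koopman operators. A one-line computation gives
\begin{equation*}
\Koop_2\bigl(\Phi(\psi)\bigr)=\psi\circ T^{-1}\circ F_2=\psi\circ T^{-1}\circ T\circ F_1\circ T^{-1}=\Phi(\Koop_1\psi).
\end{equation*}
Consequently $\Phi$ sends $\Koop_1$-invariant subspaces to $\Koop_2$-invariant subspaces. Because $\Phi(g_{1,i})=g_{2,i}$, $\Phi(W_1)$ is a $\Koop_2$-invariant subspace containing every output function of $S_2$, so by minimality of $W_2$ we get $W_2\subseteq\Phi(W_1)$. The symmetric argument using $\Phi^{-1}$ yields $W_1\subseteq\Phi^{-1}(W_2)$, i.e.\ $\Phi(W_1)\subseteq W_2$. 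Hence $\Phi(W_1)=W_2$, and since $\Phi$ is a linear isomorphism, $N_1=N_2=:N$.

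Next I would translate this isomorphism into the LOR coordinates. Express the image of each basis function of $W_1$ in the basis of $W_2$: there is an invertible $P\in\ff^{N\times N}$ with $\Phi(\phat_1)=P\phat_2$ as vectors of functions. Evaluating at $T(x_0)$ and using $\Phi(\phat_1)(T(x_0))=\phat_1(x_0)$ gives $\phat_1(x_0)=P\phat_2(T(x_0))$, so by Proposition \ref{prop:OE} the LOR initial states satisfy $y_1(0)=Py_2(0)$. Apply $\Koop_1$ (resp.\ $\Koop_2$) to both sides of $\Phi(\phat_1)=P\phat_2$, use the intertwining identity, and read off coordinates to obtain $\matK_1 P=P\matK_2$, hence $\matK_2=P^{-1}\matK_1 P$. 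The identity $g_2=g_1\circ T^{-1}$ together with $g_i=\Gamma_i\phat_i$ similarly yields $\Gamma_1 P=\Gamma_2$. Thus $L_2$ is obtained from $L_1$ by the linear state change $y_1=Py_2$, which is the claim.

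The main obstacle I expect is the correct bookkeeping of pullback versus pushforward: one must be careful that $\Phi=\cdot\circ T^{-1}$ (not $\cdot\circ T$) so that the intertwining identity $\Koop_2\Phi=\Phi\Koop_1$ comes out in the right direction, and that the evaluation map sending $\phat(x_0)$ to the LOR initial state transforms contravariantly, producing $y_1=Py_2$ rather than its inverse. Once this bookkeeping is fixed, the minimality argument and the matrix identities are routine.
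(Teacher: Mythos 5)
Your proof is correct. It rests on the same underlying observation as the paper's --- composition with $h^{-1}$ (your pullback $\Phi(\psi)=\psi\circ T^{-1}$) carries the function-space picture of $S_1$ onto that of $S_2$, which is exactly the correspondence $\tilde{\psi}_k = \psi_k\circ h^{-1}$ the paper writes down --- but you reach the conclusion by a cleaner, more structural route. The paper proves $g_i(F^{(j)}(x)) = \tilde{g}_i(\tilde{F}^{(j)}(\tilde{x}))$ and then walks through Algorithm \ref{alg:LORConst} step by step, arguing that the dependence indices $l_1,\dots,l_i$ computed for $S_1$ and $S_2$ coincide, so the algorithm delivers the same dimension and, with its canonical bases, literally the same matrices $\matK$ and $\Gamma$ (the linear state transformation only appears when one allows other basis choices). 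You instead establish the intertwining identity $\Koop_2\Phi=\Phi\Koop_1$ once, and then invoke the minimality of $W$ as the smallest $\Koop$-invariant subspace containing the outputs to get $\Phi(W_1)=W_2$, from which equal dimension and the similarity $\matK_2=P^{-1}\matK_1P$, $\Gamma_2=\Gamma_1P$, $y_1=Py_2$ follow for arbitrary bases. Your version is basis-independent from the outset and does not depend on the internal bookkeeping of the algorithm (only on the minimality property the paper asserts for $W$); what the paper's concrete argument buys in exchange is the sharper statement that with the algorithmic bases the two LORs are identical, i.e.\ your $P$ can be taken to be the identity.
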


The proof is provided in the appendix. With this construction of the LOR that generates all the output sequences of a nonlinear system by a suitable choice of initial condition, we use this LOR to analyze the observability of the nonlinear DSFF. 


\section{Observability of the DSFF through LOR}
\label{s4}
Given $\Vo$ over $\ff^n$, let $\chi_i(x) \in \Vo $ be the $i^{th}$ coordinate function defined as
\[
 \chi_i(x) = x_i.
\]
The state of the system at a point $x_0 \in \ff^n$ is the evaluation of the $n$-tuple of coordinate functions $\hchi = [\chi_1,\dots,\chi_n]^T$ at $x_0$. The unique reconstruction of the initial state $x_0$ of the DSFF from the output sequence is equivalent to the unique reconstruction of the $\hchi$ at $x_0$ from the output sequence. The following result characterizes the observability of the nonlinear DSFF through the LOR.

\begin{theorem}
\label{thm:obs_phat}
The DSFF is observable iff the map $\phat: \ff^n \to \ff^N$ defined in (\ref{eq:phat}) is an injective map of $\ff^n$ into $\ff^N$. 
\end{theorem}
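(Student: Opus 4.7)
The proof should be a fairly direct consequence of Proposition \ref{prop:OE} (output-sequence matching between the DSFF and its LOR via the initialization $y_0 = \hat{\psi}(x_0)$) together with Lemma \ref{lem:LOR-obs} (observability of the LOR). My plan is to show both implications by tracking how the initial state of the DSFF maps into the internal state of the LOR through $\hat{\psi}$.

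For the ``only if'' direction, I would assume the DSFF is observable and suppose, for contradiction, that $\hat{\psi}(x_0) = \hat{\psi}(x_0')$ for some distinct $x_0, x_0' \in \ff^n$. Using the identity
\[
g(F^{(k)}(x)) = \Gamma \matK^k \phat(x)
\]
established inside the proof of Lemma \ref{lem:LOR-obs} (equation (\ref{eq:op_rec})), I would observe that the output sequences generated by the DSFF from $x_0$ and from $x_0'$ must coincide for every $k \geq 0$. This contradicts observability of the DSFF, so $\hat{\psi}$ must be injective.

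For the ``if'' direction, assume $\hat{\psi}$ is injective and let $\mz_K$ be an output sequence produced by the DSFF from some initial condition $x_0$. By Proposition \ref{prop:OE}, the same sequence is produced by the LOR initialized at $y_0 = \hat{\psi}(x_0)$. Lemma \ref{lem:LOR-obs} guarantees that $y_0$ is recovered uniquely from $\mz_N$ through equation (\ref{eq:phat1}), so any other $x_0'$ generating the same output must satisfy $\hat{\psi}(x_0') = y_0 = \hat{\psi}(x_0)$. Injectivity then forces $x_0' = x_0$, establishing observability of the DSFF with $K \leq N$ as a bonus.

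The two implications are each fairly short once Proposition \ref{prop:OE} and Lemma \ref{lem:LOR-obs} are in hand, so I do not anticipate a serious technical obstacle. The only subtlety worth being careful about is keeping the roles of three different state spaces straight: the DSFF state $x_0 \in \ff^n$, the LOR state $y_0 \in \ff^N$, and the lifted output $\hat{\psi}(x_0)$, which lives in $\ff^N$ but in general only fills the image of $\hat{\psi}$ (a subset of $\ff^N$ of size $|\ff|^n$ when $\hat{\psi}$ is injective). Because the LOR is observable on all of $\ff^N$, the one nontrivial check is that the unique $y_0$ recovered from the output sequence actually lies in the image of $\hat{\psi}$ so that ``pulling back through $\hat{\psi}$'' makes sense; this is automatic from Proposition \ref{prop:OE} since the output sequence arose from a genuine DSFF trajectory to begin with.
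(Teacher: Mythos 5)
Your proposal is correct and follows essentially the same route as the paper: both directions rest on Proposition \ref{prop:OE} (equivalently, the identity $g(F^{(k)}(x)) = \Gamma \matK^{k}\phat(x)$) together with Lemma \ref{lem:LOR-obs}, recovering $\phat(x_0)$ uniquely from the outputs and using injectivity to pull back to $x_0$, with necessity obtained from the fact that equal lifted states force equal output sequences. The only differences are cosmetic (your contradiction phrasing of necessity, and the correct remark that the recovered $y_0$ lies in the image of $\phat$).
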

\begin{proof}
(Sufficiency): Given that the map $\phat$ is an injection, for every $\phat(x_0) \in \ff^N$, there exists an unique $x_0 \in \ff^n$. Consider any output sequence $z(k)$ of the DSFF. For the assignment $\tz(k) = z(k)$ of the LOR and using the fact that the LOR is observable, there exists a unique $y_0$ satisfying $y_0 = \phat(x_0)$. Under the assumption that $\phat$ is $1-1$, there exists a unique $x_0$ corresponding to the output sequence $z(k)$ of the DSFF, which proves that the DSFF is observable. 

(Necessity): Given that the DSFF is observable, for every output sequence $z(k)$, there exists a unique $x_0$ which generates that sequence. When LOR is initiated with $\phat(x_0)$, the output sequence $\tz(k)$ of the LOR is identical to the output sequence $z(k)$ of the DSFF. Since LOR is observable, the initial condition $\phat(x_0)$ is unique for a given output sequence. Hence, the map $\phat(x_0)$ is $1-1$ as each $x_0$ generates a unique output sequence $z(k)$. 
\end{proof}


\begin{corollary}
\label{cor:DSFF-obs_suf}
The DSFF is observable if all the coordinate functions are in $W$
\end{corollary}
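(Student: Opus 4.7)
The plan is to apply Theorem \ref{thm:obs_phat} and reduce the claim to showing that $\phat$ is injective whenever every coordinate function $\chi_i$ lies in $W$. Since Theorem \ref{thm:obs_phat} already equates observability of the DSFF with injectivity of $\phat: \ff^n \to \ff^N$, only the injectivity needs to be established under the hypothesis.

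First I would observe that by hypothesis each coordinate function $\chi_i$ belongs to $W$, and $\mcb = \{\psi_1,\dots,\psi_N\}$ is a basis of $W$. Therefore each $\chi_i$ is an $\ff$-linear combination of the basis functions $\psi_j$, and stacking these expressions yields a matrix $T \in \ff^{n \times N}$ such that
\[
\hchi = T \phat,
\]
viewed as an identity of vectors of functions in $W^n$. Evaluating this identity at an arbitrary point $x \in \ff^n$ gives $\hchi(x) = T \phat(x)$, i.e.\ $x = T \phat(x)$ since $\hchi$ is by definition the identity map on $\ff^n$.

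Next I would use this factorization to establish injectivity of $\phat$. Suppose $\phat(x_0) = \phat(x_0')$ for some $x_0, x_0' \in \ff^n$. Multiplying on the left by $T$ and invoking $\hchi = T \phat$, I get $x_0 = T \phat(x_0) = T \phat(x_0') = x_0'$, so $\phat$ is indeed injective. Hence the hypothesis of Theorem \ref{thm:obs_phat} is satisfied and the DSFF is observable.

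There is no real obstacle here beyond unpacking the meaning of $\chi_i \in W$ and recognizing that $\hchi$ being the identity map turns the linear factorization through $\phat$ into a left inverse on the image of $\phat$; the argument is essentially a one-line chase through Theorem \ref{thm:obs_phat}.
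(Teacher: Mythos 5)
Your proof is correct. It differs from the paper's own argument mainly in how the conclusion is reached: you use the hypothesis $\chi_i \in W$ to write $\hchi = T\phat$, observe that evaluating at any $x$ gives $x = T\phat(x)$, so $T$ is a left inverse of $\phat$ on its image and $\phat$ is injective, and then you invoke Theorem \ref{thm:obs_phat} to conclude observability. The paper never appeals to Theorem \ref{thm:obs_phat}; instead it argues constructively through the LOR machinery directly: by Proposition \ref{prop:OE} and Lemma \ref{lem:LOR-obs} the output sequence determines $y_0 = \phat(x_0)$ uniquely as the solution of the linear system (\ref{eq:phat1}), and then the same factorization $\hchi = C\phat$ recovers $x_0 = C\phat(x_0)$. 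Both arguments hinge on exactly the same fact (the coordinate functions factor linearly through the basis of $W$), so the mathematical content is equivalent; what your route buys is brevity and a clean reduction to the already-proved observability criterion, while the paper's route buys an explicit reconstruction procedure for $x_0$ (solve (\ref{eq:phat1}) for $y_0$, then multiply by $C$), which is of practical value since the corollary is meant to support computation, not just a yes/no observability verdict.
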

\begin{proof}
Given all the coordinate functions are in $W$, there exists a unique matrix $C \in \ff^{n \times N}$ such that 
\begin{align}
\label{eq:statemap}
\hchi = C \phat.
\end{align}
From LOR and Proposition \ref{prop:OE}, we know that if LOR is initialized with $y_0 = \phat(x_0)$, then $z(k) = \tz(k)$. To compute this $\phat(x_0)$, we solve for $y_0$ in (\ref{eq:phat1}). Since LOR is observable, a unique $\phat(x_0)$ exists for each output sequence $z(k)$. This unique $\phat(x_0)$ gives the unique $x_0$ through equation (\ref{eq:statemap}) as
\[
x_0 = \hchi(x_0) = C \phat(x_0),
\]
which is the unique initial condition corresponding to the output $z(k)$. 
\end{proof}
 When the system is linear, the above corollary is both necessary and sufficient, as the observability matrix is full rank, thereby enabling the recovery of initial condition from the outputs as a solution to a linear system of equations (\ref{eq:ObsEqLin}). 
The following result establishes the connection between the number of outputs required for the unique reconstruction of the initial condition and the dimension of the LOR. Such a result is important for establishing computational bounds on the observability problem of nonlinear DSFF.
\begin{theorem}
Given an observable DSFF (\ref{eq:DSFF}) and its corresponding LOR (\ref{eq:LOR}), then the maximum number of outputs required for the unique reconstruction of the initial condition is $N$.
\end{theorem}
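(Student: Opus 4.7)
The plan is to reduce the observability problem of the nonlinear DSFF to the already-established observability of the linear system LOR, using the correspondence between their output sequences (Proposition \ref{prop:OE}) together with the injectivity of $\phat$ furnished by Theorem \ref{thm:obs_phat}. Since the LOR lives in $\ff^N$ and is known to be observable by Lemma \ref{lem:LOR-obs}, standard linear observability theory should give an upper bound of $N$ outputs on the LOR side; pushing this through $\phat$ transfers the bound to the DSFF.

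Concretely, I would proceed in three steps. First, invoke Lemma \ref{lem:LOR-obs} to conclude that the observability matrix $\mco = [\Gamma^T, (\Gamma \matK)^T, \dots, (\Gamma\matK^{N-1})^T]^T$ has rank $N$. By the Cayley-Hamilton theorem applied to $\matK \in \ff^{N\times N}$, every row $\Gamma \matK^j$ for $j \geq N$ is an $\ff$-linear combination of the rows already present in $\mco$, so no additional information about $y_0$ is gained from outputs beyond index $N-1$; hence the linear system (\ref{eq:phat1}), consisting of exactly $N$ output equations, determines $y_0 \in \ff^N$ uniquely. Second, given any output sequence $z(k)$ of the DSFF generated from some $x_0 \in \ff^n$, use Proposition \ref{prop:OE} to assert that initializing the LOR at $y_0 = \phat(x_0)$ produces $\tz(k) = z(k)$; therefore the first $N$ DSFF outputs uniquely recover the vector $y_0 = \phat(x_0) \in \ff^N$ via (\ref{eq:phat1}). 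Third, invoke the hypothesis that the DSFF is observable, which by Theorem \ref{thm:obs_phat} means that the map $\phat : \ff^n \to \ff^N$ is injective; thus the uniquely determined value $\phat(x_0)$ determines $x_0$ uniquely, and the reconstruction is accomplished with $N$ outputs.

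The argument is essentially a transfer lemma, so the main obstacle is not technical depth but conceptual care in the bookkeeping: one must check that Cayley-Hamilton really does cut the required output window down to length $N$ on the LOR (rather than, say, to the observability index of the LOR which could be smaller), and one must be sure that the matching of DSFF and LOR output sequences under $y_0 = \phat(x_0)$ is available for every $x_0$, not only for those in the image of some distinguished set. Both points are covered by the cited results, so the proof should consist of the short chain $(\text{DSFF outputs}) \leftrightarrow (\text{LOR outputs with } y_0 = \phat(x_0)) \to (y_0 \text{ unique from } N \text{ samples}) \to (x_0 \text{ unique by injectivity of } \phat)$, yielding the stated bound of $N$.
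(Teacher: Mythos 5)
Your proof is correct, and it reaches the bound by a slightly different route than the paper. You transfer uniqueness through the LOR: full rank of $\mco$ (Lemma \ref{lem:LOR-obs}) pins down $y_0=\phat(x_0)$ from the first $N$ outputs via (\ref{eq:phat1}) and Proposition \ref{prop:OE}, and then the hypothesis of observability, through Theorem \ref{thm:obs_phat}, gives injectivity of $\phat$ and hence uniqueness of $x_0$; your Cayley--Hamilton remark on the LOR side is actually redundant once $\mco$ is known to have rank $N$. The paper instead works directly on the nonlinear equations (\ref{eq:ObsK}): using the identity $g(F^{(i)}(x))=\Gamma\matK^{i}\phat(x)$ from (\ref{eq:op_rec}) and Cayley--Hamilton for $\matK$, it shows every equation $z(N+k)=g(F^{(N+k)}(x_0))$ is an $\ff$-linear combination of the first $N$ equations, so the $K$-equation system collapses to (\ref{eq:ObsEqns}) and the unique solution guaranteed by observability survives the truncation; Theorem \ref{thm:obs_phat} is never invoked. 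The two arguments hinge on the same identity and on $\dim W = N$, but they buy slightly different things: yours is a cleaner, modular transfer argument, while the paper's elimination argument does not use observability of the DSFF at all to establish the redundancy of outputs beyond index $N-1$, which is exactly what makes the subsequent corollary (an initial condition is observable iff it is recoverable from $\mz_{N}$, even for unobservable systems) an immediate consequence -- a statement your route, which needs global injectivity of $\phat$, would not yield without modification (e.g.\ replacing injectivity of $\phat$ by injectivity on the fiber of the given output sequence).
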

\begin{proof}
As the DSFF is observable, then there exists a $K > 0$ such that the following set of equations have a unique solution $x_0$
\small{\begin{align}
\label{eq:ObsK}
\begin{aligned}
z(0) &= g(x_0) \\ 
&\ \ \vdots  \\
z(N) &= g(F^{(N)}(x_0)) \\
&\ \ \vdots \\
z(K-1) &=g(F^{(K-1)}(x_0)). \\
\end{aligned}
\end{align}}
\normalsize{}
 We claim that the last $K-N$ equations can be eliminated by linear combinations of the first $N$ equations. From equation (\ref{eq:op_rec}) we see that 
 \[
 g(F^{(N)}(x)) = \Gamma \matK^{N}\phat(x).
 \]
 As a consequence of Cayley Hamilton Theorem, we can see that 
\small{ \begin{align*}
     g(F^{(N)}(x)) &= \Gamma \matK^{N} \phat(x) =  \Gamma \big( \sum_{i = 1}^{N} \alpha_i \matK^{N-i} \big)  \phat(x) \\
     &= \sum_{i=1}^{N} \alpha_i \Gamma \matK^{N-i} \phat(x) =  \sum_{i=1}^{N} \alpha_i g(F^{(N-i)}(x),
 \end{align*}}
 \normalsize{}
 where $\alpha_i$ are the coefficients corresponding to the characteristic polynomial of $\matK$ given as 
 \small{\[
 p(X) = X^N - \sum_{i=1}^N \alpha_i X^{N-i}.
 \]}
 \normalsize{}
By extension, we see that all the equations corresponding to $g(F^{(N+k)}(x))$, $k>0$ are linearly dependent on the first $N$ equations and hence the system of equations (\ref{eq:ObsK}) reduces to 
\begin{align}
\label{eq:ObsEqns}
\begin{aligned}
z(0) &= g(x_0) \\ 
&\ \ \vdots  \\
z(N-1) &= g(F^{(N-1)}(x_0)).
\end{aligned}
\end{align}
From the observability of the DSFF, we know that the above equation has a unique initial condition $x_0$ for every output sequence of the system. 
\end{proof}

As a consequence of the above theorem, we have the following result about the observability of a specific output sequence even when the system is unobservable.
\begin{corollary}
Any initial condition $x_0$ is observable iff it can be uniquely computed from the output sequence $Z_{N-1}$.
\end{corollary}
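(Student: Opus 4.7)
The plan is to leverage the Cayley--Hamilton reduction already used in the proof of the preceding theorem, but apply it pointwise at the single initial condition $x_0$ rather than invoking observability of the entire system. The crucial ingredient is equation (\ref{eq:op_rec}): for every $k \geq 0$ and every $x \in \ff^n$,
\[
g(F^{(k)}(x)) = \Gamma \matK^k \phat(x),
\]
together with the fact that the coefficients $\alpha_1, \ldots, \alpha_N$ in the characteristic polynomial of $\matK$ depend only on $\matK$ and not on $x$. Hence $g(F^{(N+k)}(x))$ is a fixed $\ff$-linear combination of $g(x), g(F(x)), \ldots, g(F^{(N-1)}(x))$, uniformly in $x$.

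The sufficiency direction ($\Leftarrow$) is immediate: if $x_0$ is the unique initial condition generating $Z_{N-1}$, then Definition 1 applies with $K = N$, so $x_0$ is observable with $K_{x_0} \leq N$.

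For necessity ($\Rightarrow$), I would argue by contradiction. Suppose $x_0$ is observable but some $x_0' \neq x_0$ produces the same output sequence on the window $[0, N-1]$, i.e.\ $g(F^{(i)}(x_0')) = g(F^{(i)}(x_0))$ for $i = 0, 1, \ldots, N-1$. Applying the Cayley--Hamilton recurrence to both orbits with the same coefficients $\alpha_i$, one obtains by induction that
\[
g(F^{(N+k)}(x_0')) = g(F^{(N+k)}(x_0)) \quad \text{for every } k \geq 0.
\]
Therefore $x_0$ and $x_0'$ generate identical output sequences of every length, so no finite window can distinguish them, contradicting the observability of $x_0$. Hence $Z_{N-1}$ uniquely determines $x_0$.

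The only subtle point, and what I expect to be the main obstacle, is ensuring that the Cayley--Hamilton recurrence truly propagates the equality to two distinct orbits simultaneously. This works precisely because the coefficients $\alpha_i$ come from the characteristic polynomial of the matrix $\matK$ that governs \emph{all} orbits through the identity $g \circ F^{(k)} = \Gamma \matK^k \phat$; thus the same recurrence binds the output values along any initial condition, independently of $x$. Once this uniform recurrence is in hand, the contradiction with observability of $x_0$ is automatic.
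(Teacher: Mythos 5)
Your proof is correct and follows essentially the same route as the paper: the corollary is presented there as an immediate consequence of the preceding theorem, whose Cayley--Hamilton argument shows that every output equation $g(F^{(N+k)}(x)) = \Gamma\matK^{N+k}\phat(x)$ is a fixed $\ff$-linear combination of the first $N$ equations, uniformly in $x$. Your contribution is merely to make explicit the pointwise contradiction argument (agreement of two orbits on the window $[0,N-1]$ propagates to all times), which the paper leaves implicit in its remark that uniqueness not ascertained with $N$ outputs can never be ascertained with more.
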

An important consequence of the above two results is that when the uniqueness of the initial condition from the output is not ascertained with $N$ outputs, it will never be ascertained with any additional outputs. Such a result is important as this upper bound helps to decisively comment about the observability of a nonlinear system (or a particular output sequence).

\section{Conclusion}
In this paper, we developed a linear approach for the observability of nonlinear DSFF using the Koopman operator. Through this framework, we defined the LOR, the linear output realization for a nonlinear system, which can be viewed as the smallest dimensional linear system that can reproduce all the outputs of the original nonlinear system. The dimension of this linear system is defined through the construction of an invariant subspace of the Koopman operator. It is further established that the maximum number of outputs required to determine the initial condition from the corresponding output sequence uniquely equals the dimension of the invariant subspace. This work is a preliminary investigation of the observability of nonlinear DSFF, and important future directions from this work will include constructing efficient algorithms to determine the injectivity of the map $\phat$ for specific classes of systems or specific fields $\ff$. In particular, it would be interesting to explore this problem for Boolean Dynamics, where results from Boolean theory can be used. Further, we can explore construction of observers for nonlinear DSFF using this framework. 
\appendix
\subsection*{Proof of Theorem \ref{lem:NS_Transform}:}
Consider two non-linear DSFFs $S1, S2$ defined as 
\begin{align*}
\begin{aligned}
    x(k+1) &= F(x(k)) \\ 
    z(k) &= g(x(k)) 
\end{aligned} 
\hspace{.2in}
\begin{aligned}
    \hx(k+1) &= \hF(\hx(k)) \\ 
    z(k) &= \hg(\hx(k)) 
\end{aligned}
\end{align*}
related by a non-singular state transformation $\hx = \ST(x)$. Let their corresponding LOR be $L1,L2$ and defined through the following equations
\begin{align*}
\begin{aligned}
    y(k+1) &= \matK y(k) \\
    z(k) &= \Gamma y(k) 
    \end{aligned}
    \hspace{.2in}
    \begin{aligned}
        \hy(k+1) &= \tilde{\matK} \hy(k) \\
        z(k) &= \tilde{\Gamma} \hy(k).
    \end{aligned} 
\end{align*}
constructed as per Algorithm \ref{alg:LORConst} with invariant subspace $W$ and $\tilde{W}$ with basis functions $\psi_j$ and $\tilde{\psi}_j$.

First we show that $\hF \circ \ST = \ST \circ F$
\begin{align}
\label{eq:temp1}
\begin{aligned}
\hF \circ \ST(x(k))  &= \hF(\hx(k)) = \hx(k+1) = \ST(x(k+1)) \\
& = \ST(F(x(k)) = \ST \circ F(x(k)) \quad \forall\ x(k)
\end{aligned}
\end{align}
Similarly, we can prove that $\hg \circ \ST = g$. Further,
\begin{align}
\label{eq:Fg-equivalence}
\begin{aligned}
g(F^{(i)}(x))  = (\hg \circ \ST) \circ F^{(i)}(x)  = \hg \circ (\ST \circ F^{(i)}) (x) \\
= \hg  \circ (\hF^{(i)} \circ \ST) (x)  = \hg(\hF^{(i)} (\hx))
\end{aligned}
\end{align}
for all $x$ and $\hx = \ST(x)$. 

In the construction of the invariant subspace $W$, we compute functions $g_i (F^{(j)}(x))$ as defined in statements 3 or 10 of Algorithm \ref{alg:LORConst}. It can be seen from the equation (\ref{eq:Fg-equivalence}) that for all $x \in \ff^n$, and $\hx$ such that $\hx = \ST(x)$ 
\begin{align}
\label{eq:temp2}
g_i (F^{(j)}(x)) = \hg_i (\hF^{(j)}(\hx)) \quad \forall i = 1,\dots,m \quad j \geq 0 
\end{align}
This establishes that $l_1$ and $l_i$ as computed in statements 3 or 10 of Algorithm \ref{alg:LORConst} are the same irrespective of the system being $S1$ or $S2$. This proves that the dimension of the invariant subspace is the same. Also since the basis functions $\psi_k$ as defined in Algorithm \ref{alg:LORConst} consists of functions only of the form $g_i (F^{(j)}(x))$, from (\ref{eq:temp2}) there is an 1-1 correspondence between the basis functions as follows
\begin{align}
\label{eq:temp3}
\begin{aligned}
\tilde{\psi}_k(\hx) &= (\tilde{\psi}_k \circ h)(x) = \tilde{g}_i (\tilde{F}^{(j)}(\tilde{x})) \\ 
&= (g_i \circ h^{-1})\circ (h \circ F^{(j)} \circ h^{-1}) \circ (h^{-1} \circ x) \\
& = g_i(F^{(j)}(x)) = \psi_k(x)
\end{aligned}
\end{align}
Further for this chosen $\hat{\psi}(x)$ and $\widehat{\tilde{\psi}}(x)$, 
\small{\begin{align*}
\Koop \phat = \begin{bmatrix} \psi_1 \circ F(x) \\ \vdots \\ \psi_N \circ F(x) \end{bmatrix} &=  \begin{bmatrix} \psi_1 \circ \ST^{-1} \circ \ST \circ F \circ \ST^{-1} \circ \ST (x) \\ \vdots \\ \psi_N \circ \ST^{-1} \circ \ST \circ F \circ \ST^{-1} \circ \ST (x) \end{bmatrix}  \\
&= \begin{bmatrix} \tilde{\psi}_1 \circ \hF (\hx) \\ \vdots \\ \tilde{\psi}_N \circ \hF(\hx)  \end{bmatrix}  = \tilde{\Koop} \widehat{\tilde{\psi}} 
\end{align*}}
\normalsize

The above equation proves that the action of $\Koop$ on the basis function $\phat$ is the same as the action of $\tilde{\Koop}$ on the basis functions $\widehat{\tilde{\psi}}$. This along with the $1-1$ correspondence between the basis functions (\ref{eq:temp3}) gives that the matrices $\matK$ and $\tilde{\matK}$ are the same with the chosen basis $\phat$ and $\widehat{\tilde{\psi}}$. Similarly, it can be proven that with the matrices $\Gamma$ and $\tilde{\Gamma}$ are the same matrices. This proves that the LOR for the systems $S1$ and $S2$ are the same for this particular choice of basis functions $\mcb$ as defined in Algorithm \ref{alg:LORConst}. In principle, one can choose different basis for the LORs and then there exists a linear state transformation between the LORs $L1$ and $L2$. \hfill \qed





\end{document}